\definecolor{Red}{rgb}{1,0,0}
\definecolor{Blue}{rgb}{0,0,1}
\definecolor{Olive}{rgb}{0.41,0.55,0.13}
\definecolor{Yarok}{rgb}{0,0.5,0}
\definecolor{Green}{rgb}{0,1,0}
\definecolor{MGreen}{rgb}{0,0.8,0}
\definecolor{DGreen}{rgb}{0,0.55,0}
\definecolor{Yellow}{rgb}{1,1,0}
\definecolor{Cyan}{rgb}{0,1,1}
\definecolor{Magenta}{rgb}{1,0,1}
\definecolor{Orange}{rgb}{1,.5,0}
\definecolor{Violet}{rgb}{.5,0,.5}
\definecolor{Purple}{rgb}{.75,0,.25}
\definecolor{Brown}{rgb}{.75,.5,.25}
\definecolor{Grey}{rgb}{.5,.5,.5}
\newcommand{\pr}{\mathbb{P}}
\newcommand{\E}[1]{\mathbb{E}\!\left[#1\right]}
\newcommand{\R}{\mathbb{R}}
\newcommand{\mb}[1]{\mbox{\boldmath $#1$}}
\newcommand{\ignore}[1]{\relax}
\newtheorem{theorem}{Theorem}[section]
\newtheorem{lemma}[theorem]{Lemma}
\newcommand{\ER}{Erd{\"o}s-R\'{e}nyi }
\newcommand{\eps}{\epsilon}
\newcommand{\abs}[1]{\lvert #1\rvert}
\newcommand{\hJ}{\hat{J}}
\author{
\sf David Gamarnik
\thanks{Operations Research Center and Sloan School of Management, MIT. Email: gamarnik@mit.edu} 
\and 
\sf Aukosh Jagannath
\thanks{Department of Statistics and Actuarial Science and Department of Applied Mathematics, University of Waterloo.
Email: a.jagannath@uwaterloo.ca
}
\and 
\sf Alexander S. Wein
\thanks{Department of Mathematics, Courant Institute of Mathematical Sciences,
New York University.
Email: awein@cims.nyu.edu}
}
\begin{document}

\title{Circuit Lower Bounds for the p-Spin  Optimization Problem}
\date{\today}

\maketitle

\begin{abstract}
We consider the problem of finding a near ground state of a $p$-spin model with Rademacher couplings by means of a low-depth circuit. As a direct extension of the authors' recent work~\cite{gamarnik2020lowFOCS}, we establish that any poly-size $n$-output circuit that produces a spin assignment with objective value within a certain constant factor of optimality, must have depth at least $\log n/(2\log\log n)$ as $n$ grows. This is stronger than the known state of the art bounds of the form $\Omega(\log n/(k(n)\log\log n))$ for similar combinatorial optimization problems, where $k(n)$ depends on the optimality value. For example, for the largest clique problem $k(n)$ corresponds
to the square of the size of the clique~\cite{rossman2010average}. At the same time our results are not quite comparable since in our case the circuits are required to produce a \emph{solution} itself rather than solving the associated decision problem. As in our earlier work~\cite{gamarnik2020lowFOCS}, the approach  
is based on the overlap gap property (OGP) exhibited by random $p$-spin models, but the derivation of the circuit lower bound relies further on standard facts from Fourier analysis
on the Boolean cube, in particular the Linial-Mansour-Nisan Theorem.

To the best of our knowledge, this is the first instance when methods from spin glass theory have ramifications for circuit complexity.
\end{abstract}

\section{Introduction}
Boolean circuits constitute one of the standard models for understanding algorithmic tractability 
and hardness in combinatorial optimization problems~\cite{arora2009computational,sipser,alon2004probabilistic}. 
One potential route to proving the widely-believed
conjecture $P\ne NP$ would be to show non-existence of polynomial-size Boolean circuits
solving problems in $NP$. A large body of literature in circuit complexity is devoted
to establishing limits on the power of circuits with bounds on its depth, specifically
the power of constant depth circuits known as AC[0] circuits and its immediate extension --
circuits with nearly logarithmic (in problem size) depth. Many hardness results have been
obtained by working on models with random inputs. A classical result is one by H\r{a}stad~\cite{hastad1986almost}
who established a $\log n/\left(\log\log n+O(1)\right)$ lower bound on the depth of any poly-size circuit that computes the $n$-parity function. A lot of focus is on circuit complexity for computing various
natural combinatorial optimization problems.
For example Rossman~\cite{rossman2010average} has shown
that poly-size circuits detecting the presence of a $k(n)$-size clique in a graph must have 
depth at least $\Omega(\log n/(k^2(n)\log\log n))$, where $n$ is the number of graph nodes and $k(n)$
is any function growing in $n$.
This was obtained by working with the sparse random \ER  graph model with 
the edge probability tuned so that the largest ``naturally occurring'' clique is of size smaller than $k(n)$,
with high probability as $n$ increases. In fact, for combinatorial optimization problems this 
represents the state of the art lower bound on circuit depth for polynomial-size circuits, though many extensions exist for subgraph homomorphism existence 
problems~\cite{rossman2018lower,li2017ac}.

In this paper we establish a lower bound of $\log n/(2\log\log n)$ on the depth of any polynomial-size Boolean circuit that solves another natural combinatorial optimization problem. It is a
core computational problem in statistical physics of finding a near ground state
of a p-spin model with i.i.d.\ Rademacher couplings. 
Specifically, we show that any polynomial-size $n$-output Boolean circuit
that produces a $\pm 1$ spin assignment with objective value within a certain constant factor of optimality, has depth at least $\log n/(2\log\log n)$. 
In particular, unlike the aforementioned result for cliques, our lower bound does 
not depend on the objective value. The result though, strictly speaking, does not improve upon
\cite{rossman2010average}, since our circuits are supposed to produce the \emph{entire solution}, as opposed to 
just solving
the associated YES/NO decision problem. Nevertheless, to the best of our knowledge, our result presents 
the strongest known circuit depth lower bounds for combinatorial optimization problems.
\ignore{\textcolor{red}{Something perhaps we should check with Rossman}}

Our result is obtained as a fairly direct extension
of our recent work~\cite{gamarnik2020lowFOCS}, where limits are obtained for algorithms
based on low-degree polynomials for the same problem (though with Gaussian as opposed to Rademacher couplings). 
The associated optimization problem admits a polynomial-time approximation scheme for the case 
$p=2$ \cite{montanari2021optimization} and also for some so-called mixed spin models~\cite{alaoui2020optimization}, as well as for some so-called spherical
spin glass models~\cite{subag2021following} associated with optimizing over the sphere as opposed to the binary cube.
But as shown in~\cite{gamarnik2020lowFOCS}, algorithms based on low-degree polynomials fail in general
for the problem of finding near ground states in $p$-spin models.

As in our earlier paper we employ the overlap gap property (OGP) exhibited
by this and many other combinatorial optimization and constraint satisfaction problems with random inputs.
In the present context, the OGP says that every two spin assignments that achieve value within a certain
multiplicative factor away from optimality have to either agree on many coordinates (high overlap)
or disagree on many coordinates (low overlap). More specifically, 
as in the aforementioned reference, we employ the so-called ensemble overlap gap property (e-OGP), which
we formally introduce in the next section. At the same time, employing standard circuit insensitivity
results, specifically the Linial-Mansour-Nisan Theorem~\cite{LMN}, we show that
vectors produced by the outputs of $n$ low-depth circuits can produce two spin assignments with overlap
value ruled out by the e-OGP, thus obtaining a contradiction. 
The Linial-Mansour-Nisan Theorem states that low-depth polynomial-size circuits are well 
approximated by the low-degree terms in their Fourier expansion.
The version we use in this paper is found on page 93 in~\cite{o2014analysis}. The tightest known bound of this kind
was obtained by Tal in~\cite{tal2017tight}.

We believe that the approach used in this paper can be extended to many other models exhibiting OGP
and its variants (see~\cite{gamarnik2020lowFOCS} for a more thorough treatment of the subjects).
However, additional analysis seems necessary for validating such extensions,
since many such models are defined on sparse random graphs, 
and the variant of the Linial-Mansour-Nisan Theorem for sparse graph models (see for example Lemma~9 in
\cite{furst1991improved}) 
appears unfortunately too weak to rule out depths of order $\log n/\log\log n$. 
Similarly, it is an open
question whether OGP-based methods can be used to rule out decision-type (single output) low-depth circuits
for these problems.

\section{Models, Assumptions and Results}

\subsection{$p$-spin model and its ground state}
We begin by describing the optimization problem of finding the ground state value of a $p$-spin model.
Let {$J\in \{\pm 1\}^{n^p}$ denote a $p$-tensor on $\R^n$} with $\{\pm 1\}$-valued entries. That is, 
$J=\left(J_{i_1,\ldots,i_p}, 1\le i_1,i_2,\ldots,i_p\le n\right)$ and $J_{i_1,\ldots,i_p}=\pm 1$ for each
$p$-tuple $i_1,\ldots, i_p$. For every $\sigma \in \{\pm 1\}^n$, we let
\begin{align*}
\langle J, \sigma\rangle \triangleq 
\sum_{1\le i_1,i_2,\ldots,i_p\le n} J_{i_1,i_2,\ldots,i_p}\sigma_{i_1}\sigma_{i_2}\cdots \sigma_{i_p}.
\end{align*}
The optimal value of the optimization problem 
\begin{align}\label{eq:ground-state-value}
n^{-{p+1\over  2}}\max_{\sigma \in \{\pm 1\}^n} \langle J,\sigma\rangle
\end{align}
is called the ground state energy of the $p$-spin model with coupling $J$. Any optimizer $\sigma$ achieving
the optimal value is called the ground state. 
We will assume that the entries of $J$ are i.i.d.\ equiprobable $\pm 1$ (that is, Rademacher) entries. In this setting, we denote the optimal value of~\eqref{eq:ground-state-value} by $\eta^*_{n,p}$, which is a random variable. 
The normalization by $n^{p+1\over 2}$ in the objective function is chosen so that the ground state energy is typically order 1. In particular, we recall here the following well-known concentration result. 

\begin{theorem}\label{theorem:scaling-limit}
For every $p$, there exists $\eta^*_p>0$ such that for every $\epsilon>0$ 
\begin{align*}
\pr(|\eta^*_{n,p}-\eta^*_p|\ge \epsilon)\le \exp(-\gamma n),
\end{align*}
for some $\gamma>0$ and all large enough $n$.
\end{theorem}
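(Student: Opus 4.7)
The plan is to decompose the theorem into two parts: (i) a concentration inequality around the mean that is essentially free from the boundedness of the Rademacher entries, and (ii) a convergence statement $\E{\eta^*_{n,p}} \to \eta^*_p$ with $\eta^*_p > 0$. Once these two are in place, the triangle inequality yields the statement of the theorem, since the concentration rate is strong enough to absorb the $o(1)$ deviation of $\E{\eta^*_{n,p}}$ from its limit into any given $\epsilon$ for all sufficiently large $n$.

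For the concentration step, I would view the map $J \mapsto \eta^*_{n,p}(J) = n^{-(p+1)/2}\max_\sigma \langle J,\sigma\rangle$ as a function on the product space $\{\pm 1\}^{n^p}$ and apply McDiarmid's bounded differences inequality. Flipping a single entry $J_{i_1,\ldots,i_p}$ changes $\langle J,\sigma\rangle$ by at most $2$ for any fixed $\sigma$, and so, as the max of these affine functions, changes $\eta^*_{n,p}$ by at most $c_{i_1,\ldots,i_p} = 2 n^{-(p+1)/2}$. Summing over all $n^p$ coordinates gives $\sum c_{i_1,\ldots,i_p}^2 = 4 n^{-1}$, so McDiarmid yields
\begin{align*}
\pr\bigl(|\eta^*_{n,p} - \E{\eta^*_{n,p}}| \ge \epsilon/2\bigr) \le 2\exp\!\bigl(-\epsilon^2 n/8\bigr),
\end{align*}
which already has the desired form.

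For the convergence step, my plan is to show that $a_n := \E{\max_\sigma \langle J,\sigma\rangle}$ (with Rademacher $J$) is approximately superadditive in $n$ modulo a lower order error, so that Fekete's lemma (in its standard ``$\epsilon$-superadditive'' form) gives the existence of $\eta^*_p := \lim_{n\to\infty} n^{-(p+1)/2} a_n$. The cleanest implementation is via a Guerra--Toninelli interpolation: compare the $p$-spin Hamiltonian on a system of size $n_1+n_2$ to the sum of two independent $p$-spin Hamiltonians on disjoint subsystems of sizes $n_1$ and $n_2$, and use convexity/concavity of the appropriate free-energy-like functional along the interpolation. This is well-known in the Gaussian setting; to port it to Rademacher couplings I would invoke a universality/invariance argument (e.g.\ Chatterjee's Lindeberg-style comparison) to show that the Rademacher and Gaussian means agree up to $o(n^{(p+1)/2})$, which is more than enough since our concentration has rate $\exp(-\gamma n)$. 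Strict positivity $\eta^*_p > 0$ then follows from a straightforward lower bound: for instance, a simple second-moment or random-assignment argument shows $\E{\eta^*_{n,p}} \ge c_p > 0$ uniformly in $n$.

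The main obstacle is really just the existence of the limit $\eta^*_p$: the McDiarmid step and the positivity are both essentially one-line arguments, whereas the superadditivity/universality argument is the substantive ingredient. Fortunately this is one of the most-studied statements in spin glass theory, and the conclusion can be quoted from the literature (Guerra--Toninelli for Gaussian couplings, extended to Rademacher by universality) without re-deriving it in this paper.
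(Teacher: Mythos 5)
Your proposal is correct and follows essentially the same route the paper relies on: the paper gives no self-contained proof but attributes the existence of the limit to the Guerra--Toninelli interpolation (with universality handling the Rademacher couplings), and the exponential concentration is exactly the standard bounded-differences argument you describe. The only quibble is that a literal ``random assignment'' gives mean zero, so the positivity $\eta^*_p>0$ should be phrased as a max-over-$2^n$-configurations (first/second moment) bound, which is what you evidently intend.
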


The existence of the limit above follows from a rather simple interpolation argument based essentially
on Gaussian comparison inequality techniques. This was shown  first by Guerra and Toninelli~\cite{GuerraTon}.
The value of $\eta^*_p$ can be computed to any desired precision using the powerful Parisi variational
representation as was done first by Parisi~\cite{parisi1980sequence}, 
and rigorously verified later by Talagrand~\cite{talagrand2006parisi}.
See Panchenko~\cite{panchenko2013sherrington} for a book treatment of the subject.

We now state the ensemble overlap gap property (e-OGP) exhibited by the $p$-spin model above. 
For this purpose we need to consider an ensemble of random tensors constructed as follows. Let $J$ and $\tilde J$ be two independent copies of the random $p$-spin model above with Rademacher entries constructed as follows. 
Let $\leq_R$ denote the (random) total order on $[n^p]$, given by drawing points from that set
uniformly at random without replacement. We call this the uniform at random order. 
Let $I_t \in [n^p]$ be the point that was drawn at time $t$.
Starting with $J_0=J$, let $J_t$ denote the random tensor obtained from $J_{t-1}$,
by resampling the entry indexed by $I_{t}$ according to the Rademacher distribution,  independently. 
Define $\tilde J = J_{n^p}$. Note that for each fixed $t$, the values $J_t,J$ and $\tilde J$ are identically distributed (but not independent). 
We now formally state the e-OGP. 

\begin{theorem}\label{theorem:e-OGP}
For every even $p\ge 4$, there exists $0<\eta_{\rm OGP,p}<\eta_p^*$ and $0<\nu_{1,p}<\nu_{2,p}<1$, 
and $\gamma>0$ 
such that for all sufficiently large $n$, 
with probability at least $1-\exp(-\gamma n)$, 
the following holds:

\begin{enumerate}

\item[(a)]
For every $0\le t_1\le t_2\le n^p$ and every pair $\sigma_1,\sigma_2\in \{\pm 1\}^n$ satisfying 
$\langle J_{t_j},\sigma_j\rangle \ge \eta_{\rm OGP,p}n^{p+1\over 2}$ for $j=1,2$, 
it is the case 
\begin{align*}
{|\langle \sigma_1,\sigma_2\rangle|\over n}\notin (\nu_{1,p},\nu_{2,p}),
\end{align*}

\item[(b)]
In the special case $t_1=0, t_2=n^p$, in fact 
\begin{align*}
{|\langle \sigma_1,\sigma_2\rangle|\over n}\le \nu_{1,p}.
\end{align*}
\end{enumerate}

\end{theorem}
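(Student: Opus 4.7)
The plan is to prove Theorem~\ref{theorem:e-OGP} via a first moment (union bound) argument along the discretized interpolation path $t \in \{0,1,\ldots,n^p\}$. I would call a quadruple $(t_1,t_2,\sigma_1,\sigma_2)$ \emph{bad} if both energies satisfy $\langle J_{t_j},\sigma_j\rangle \ge \eta_{\rm OGP,p}\, n^{(p+1)/2}$ while the overlap lies in the forbidden region: $|\langle \sigma_1,\sigma_2\rangle|/n \in (\nu_{1,p},\nu_{2,p})$ for part~(a), and $|\langle \sigma_1,\sigma_2\rangle|/n \in (\nu_{1,p},1]$ with $(t_1,t_2)=(0,n^p)$ for part~(b). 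The objective is to show that the expected number of bad quadruples is $\exp(-\Omega(n))$, so that Markov's inequality combined with the polynomial factor $(n^p+1)^2$ coming from the choices of $(t_1,t_2)$ yields the claim with probability $1-\exp(-\Omega(n))$.

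The probabilistic core is a joint Chernoff estimate on the pair $\bigl(\langle J_{t_1},\sigma_1\rangle,\langle J_{t_2},\sigma_2\rangle\bigr)$. Writing $m = \langle \sigma_1,\sigma_2\rangle/n$ and $s = (t_2-t_1)/n^p$, I would condition on the set $S \subset [n]^p$ of coordinates resampled between times $t_1$ and $t_2$, whose cardinality concentrates at $s\cdot n^p$; outside $S$ the two tensors agree, inside $S$ they are independent Rademachers. A direct computation of the bivariate moment generating function using $\mathbb{E}[e^{\lambda X}] \le e^{\lambda^2/2}$ for Rademacher $X$, together with the identity
\begin{align*}
\sum_{\mathbf{i} \in [n]^p} \prod_{k=1}^p \sigma_1(i_k)\sigma_2(i_k) \;=\; \langle \sigma_1,\sigma_2\rangle^p \;=\; (mn)^p,
\end{align*}
and optimization over the MGF exponents, yields a bound of the form $\exp(-n R_p(\eta_{\rm OGP,p},m,s))$ with a rate function whose cross term vanishes at $s=1$. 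Combining with the overlap entropy $\#\{\sigma_2 : \langle \sigma_1,\sigma_2\rangle = mn\} \le \exp(n\, h_2((1+m)/2))$, where $h_2$ is the binary entropy in nats, gives an expected count at overlap $m$ of
\begin{align*}
\mathrm{poly}(n)\cdot \exp\!\Bigl(n\Bigl[\log 2 + h_2\!\bigl(\tfrac{1+m}{2}\bigr) - R_p(\eta_{\rm OGP,p},m,s)\Bigr]\Bigr).
\end{align*}

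The remaining task, and the chief technical obstacle, is to select $\eta_{\rm OGP,p} < \eta_p^*$ and a band $(\nu_{1,p},\nu_{2,p}) \subset (0,1)$ so that this exponent is strictly negative uniformly for $|m| \in (\nu_{1,p},\nu_{2,p})$ and $s \in [0,1]$ in part~(a), and uniformly for $|m| \in (\nu_{1,p},1-\delta]$ when $s=1$ in part~(b), with $|m|=1$ handled separately via the independence of $J$ and $\tilde J$. The difficulty is that the sub-Gaussian Chernoff bound above is by itself too lossy to push $\eta_{\rm OGP,p}$ close to $\eta_p^*$, which is needed to make the rate gap beat the entropy over a nontrivial band. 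I would close this gap by importing the sharper $p$-spin large-deviations and Guerra-type interpolation bounds from~\cite{gamarnik2020lowFOCS}, which were developed for Gaussian couplings, and transfer them to the Rademacher setting by a standard universality argument, the entries being bounded, mean zero, unit variance. The parity conditions---$p$ even and $p \ge 4$---enter both to preserve the $\sigma \mapsto -\sigma$ symmetry underlying the absolute-value overlap and to give the kernel $m \mapsto m^p$ enough curvature away from $m=0$ for a strictly positive rate gap to exist over a nontrivial middle band.
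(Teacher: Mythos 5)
There is a genuine gap here, and it is worth separating the two halves of your plan. Your primary engine --- a bivariate Chernoff bound on $\bigl(\langle J_{t_1},\sigma_1\rangle,\langle J_{t_2},\sigma_2\rangle\bigr)$ combined with the overlap entropy --- is not merely ``too lossy to push $\eta_{\rm OGP,p}$ close to $\eta_p^*$''; for the small even $p$ that the theorem must cover (in particular $p=4$), the annealed exponent $\log 2+h_2(\tfrac{1+m}{2})-\eta^2/(1+m^p)$ is not known to be negative on any band at any $\eta<\eta_p^*$, and this is precisely why the single-instance OGP in~\cite{chen2019suboptimality} is proved via the Guerra--Talagrand 1-RSB interpolation rather than a first moment bound. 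Your proposed repair does not actually repair the argument: the Guerra--Talagrand bound is a statement about the expected \emph{constrained maximum} $\E{\max_{R(\sigma_1,\sigma_2)=m}\tfrac12(H(\sigma_1)+H(\sigma_2))}$, obtained by an interpolation at the level of free energies; it is not a per-pair tail estimate and cannot be slotted into a union bound over configurations in place of the MGF bound. Once you invoke it you have abandoned the first-moment architecture entirely and are, in effect, citing the prior fixed-correlation ensemble OGP of~\cite{gamarnik2021overlap} (whose core is~\cite{chen2019suboptimality} plus the chaos property for part (b)) together with universality --- which is indeed what the paper does, but then the remaining work is a reduction that your proposal does not address.

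That reduction is the actual content of the paper's proof. The imported theorem holds for a \emph{fixed} correlation $\rho$, with constants $C,\tilde\mu$ depending on $\rho$, whereas Theorem~\ref{theorem:e-OGP} must hold uniformly over all $O(n^{2p})$ pairs $t_1\le t_2$, whose effective correlations sweep out essentially all of $[0,1]$. The paper handles this by (i) constructing the uniform-at-random resampling order so that the path $(J_t)$ passes through a bounded number $1/\delta$ of prescribed correlation levels $\hJ_0,\dots,\hJ_{1/\delta}$, and (ii) proving the continuity estimate of Lemma~\ref{lem:free-energy-bound}, which shows $\max_\sigma|\langle J_t-J_s,\sigma\rangle|\le K\sqrt{t-s}\,n^{(p+1)/2}$ with exponentially good probability, so that a $\sigma$ which is near-optimal at an intermediate time $t$ is also near-optimal at the adjacent discretization time $\tau_k$, after which the fixed-$\rho$ theorem applies. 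Without some version of this discretization-plus-Lipschitz step (or a per-pair estimate genuinely uniform in $s$, which you do not have for the reasons above), the union bound over $(t_1,t_2)$ in your plan cannot be executed. I would also flag that you attribute the sharper bounds to~\cite{gamarnik2020lowFOCS}; the relevant sources are~\cite{gamarnik2021overlap} and~\cite{chen2019suboptimality}.
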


The proof is given in Section~\ref{sec:pf-ogp}. Informally, e-OGP states that for every  pair of instances in the interpolated family,
including the case of identical instances (that is $t_1=t_2$),  every two solutions 
achieving multiplicative factor $\eta_{\rm OGP,p}/\eta^*$ to optimality in those
two instances are either close
or far from each other. Furthermore, for the case of independent instances (that is $t_1=0, t_2=n^p$),
only the latter is possible. Theorem~\ref{theorem:e-OGP} was proved in~\cite{gamarnik2021overlap} for the case
of Gaussian as opposed to Rademacher distribution, but its adaptation to the latter case
follows from standard universality-type arguments based on Lindeberg's approach, which can be found in the original paper~\cite{GuerraTon} which established the existence of the limit $\eta_p^*$.
For a recent work that covers many  examples of such arguments see~\cite{sen2018optimization} and for a textbook presentation of this and related spin glass results, see \cite{panchenko2013sherrington}.

The core of the proof in~\cite{gamarnik2021overlap} is the case of a single-instance (non-ensemble) OGP 
for the Gaussian case which was
done in~\cite{chen2019suboptimality}. Its adaptation to the ensemble case follows in a rather straightforward way
using the chaos property exhibited by the $p$-spin models. Adaptation to non-Gaussian distribution,
as we said, uses standard universality arguments.

\subsection{Boolean circuits}

We now turn to the discussion of Boolean circuits. Formal definitions of those can be found in most
standard textbooks on algorithms and computation~\cite{arora2009computational,sipser,alon2004probabilistic}.
Informally, these are functions from $\{0,1\}^M\to \{0,1\}^n$ obtained by considering a directed graph with $M$ input nodes which have in-degree zero, and $n$ output nodes which have out-degree zero. 
Each intermediate node corresponds to one of three standard Boolean operation 
$\vee, \wedge$ or $\neg$. The size of the circuit is the number of nodes in the associated graph and its
depth is the length of its longest directed path. Equivalently, one
may consider instead functions from $\{\pm 1\}^M\to \{\pm 1\}^n$ by adopting the appropriate logical
functions in gates. We also adopt this assumption for convenience. In particular, when $M=n^p$,
such a Boolean function maps any $p$-tensor with $\pm 1$ entries into an $n$-vector with $\pm 1$ entries.
Thus  given an $n$-output, $M=n^p$-input Boolean circuit $C$ and $p$-tensor $J$ with $\pm 1$ entries,
we denote by $C(J)$ the binary vector in $\{\pm 1\}^n$ produced when $C$ takes input $J$. We denote by
$C_j$ the single-output Boolean circuit associated with the $j$-th component of $C$ so that
$C(J)=(C_j(J), 1\le j\le n)$.

As stated in the introduction, we are interested in Boolean circuits with bounds on their size and depth.
Fix any two $n$-dependent positive integer-valued sequences $s(n),d(n)$ and a constant $0<\rho<1$. 
We denote by $\mathcal{C}_p(n,s(n),d(n),\rho)$ the family of all $n^p$-input,  $n$-output Boolean circuits $C$ which 
satisfy the following properties: 

\begin{enumerate}

\item[(a)]
The size (the number of gates) of $C$ is at most $s(n)$ and its depth 
is at most $d(n)$.

\item[(b)]
For every tensor $J\in \{\pm 1\}^{n^p}$ such
that the value in (\ref{eq:ground-state-value}) is 
non-negative, the output $C(G)$ satisfies 
\begin{align*}
\langle J, C(J)\rangle \ge \rho\max_{\sigma\in \{\pm 1\}^n} \langle J, \sigma\rangle.
\end{align*}
If the value in (\ref{eq:ground-state-value}) is negative,
the output $C(J)$ 
can be arbitrary.

\end{enumerate}
In other words, this is a family of circuits which 
is required to produce a solution with value 
which is at least a multiplicative constant $\rho$ away 
from optimality when the value of the optimization 
problem is non-negative. Clearly a family of such circuits can be empty if either $s(n)$ or $d(n)$ is too restrictive.

We now state our main result.

\begin{theorem}\label{theorem:Main-p-spin}
For every even $p\ge 4$, $\alpha>0$ and $\rho>\eta_{\rm OGP,p}/\eta^*_p$, for all sufficiently large $n$, either the associated
family of circuits $\mathcal{C}_p(n,s(n),d(n),\rho)$  is  empty, or $s(n)\ge n^\alpha$, or 
$d(n)\ge \log n/(2\log\log n)$.
\end{theorem}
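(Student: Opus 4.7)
The plan is to derive a contradiction with the e-OGP (Theorem~\ref{theorem:e-OGP}) by combining the Linial--Mansour--Nisan (LMN) Fourier concentration for low-depth circuits with the overlap constraints along the interpolation $\{J_t\}_{t=0}^{n^p}$. Suppose for contradiction that $\mathcal{C}_p(n,s(n),d(n),\rho)$ contains a circuit $C=(C_1,\ldots,C_n)$ with $s(n)\le n^\alpha$ and $d(n)<\log n/(2\log\log n)$ along an infinite sequence of $n$. Applying LMN to each output $C_j$ at threshold $D:=(K\log n)^{d(n)}$ for a sufficiently large constant $K=K(\alpha,p)$ yields $\sum_{|S|>D}\hat C_j(S)^2\le n^{-L}$ for any fixed $L$, and the depth assumption gives $D\le n^{1/2+o(1)}$. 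Let $\tilde C_j:=\sum_{|S|\le D}\hat C_j(S)\chi_S$ denote the Fourier truncation, so $\|C_j-\tilde C_j\|_2^2\le n^{-L}$. Because $C_j\in\{\pm 1\}$, Markov's inequality gives $\pr[\operatorname{sign}(\tilde C_j(J))\ne C_j(J)]\le n^{-L}$. Taking $L=p+2$ and union-bounding over $(j,t)\in[n]\times\{0,\ldots,n^p\}$, with probability $1-o(1)$ we have $C_j(J_t)=\operatorname{sign}(\tilde C_j(J_t))$ for every $j,t$.

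Set $\sigma^{(t)}:=C(J_t)$ and $O_t:=n^{-1}\langle\sigma^{(0)},\sigma^{(t)}\rangle$. By Theorem~\ref{theorem:scaling-limit} and the hypothesis $\rho>\eta_{\rm OGP,p}/\eta^*_p$, every $\sigma^{(t)}$ achieves objective value at least $\eta_{\rm OGP,p}n^{(p+1)/2}$ on $J_t$ with probability $1-e^{-\Omega(n)}$, on the event that the ground state is non-negative (which has probability at least $1/2$ by the sign-flip symmetry of Rademacher $J$). Theorem~\ref{theorem:e-OGP} then forces $O_0=1$, $|O_{n^p}|\le\nu_{1,p}$, and $|O_t|\notin(\nu_{1,p},\nu_{2,p})$ for every $t$. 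Hence the trajectory must jump across the forbidden interval at some step $t^*$, and the single-step Hamming jump $F_{t^*}:=|\{j:C_j(J_{t^*})\ne C_j(J_{t^*-1})\}|$ satisfies $F_{t^*}\ge(\nu_{2,p}-\nu_{1,p})n/2$.

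It remains to rule out the existence of any such $t^*$. In expectation the situation looks favorable: the LMN Fourier decay yields the total-influence bound $I(C_j)\le n^{1/2+o(1)}$, so $\sum_t\E{F_t}=\tfrac12\sum_j I(C_j)\le n^{3/2+o(1)}$, and the mean per-step flip count is $n^{3/2-p+o(1)}\ll n$ for $p\ge 4$. A naive Markov union bound over $n^p$ steps still leaves room for $\Theta(n^{1/2+o(1)})$ possible bad steps, and so is insufficient. The remedy is to use the degree-$\le 2D$ polynomial representation $n(O_t-O_{t-1})=\langle\tilde C(J_0),\tilde C(J_t)-\tilde C(J_{t-1})\rangle$ and invoke hypercontractivity on the Boolean cube (applicable because $D\le n^{1/2+o(1)}$) to upgrade the second-moment bound into sub-exponential tail bounds on $F_t$. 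This enables a uniform union bound over all $n^p$ interpolation steps, ruling out the jump forced by the previous paragraph and yielding the desired contradiction.

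The technical heart, and the main anticipated obstacle, is precisely this concentration step: the LMN total-influence bound alone is too weak to preclude occasional large single-step jumps, so one must exploit the full low-degree polynomial structure of $\tilde C$ via hypercontractivity (or equivalent Fourier-analytic machinery), together with careful control of the joint influences of a single tensor coordinate across the $n$ output bits. This is where the connection to the low-degree polynomial hardness framework of~\cite{gamarnik2020lowFOCS} enters in a non-trivial way, and it is expected to require the bulk of the technical work.
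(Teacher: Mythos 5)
Your setup (LMN Fourier concentration, the interpolation $\{J_t\}$, and the observation that e-OGP forces the overlap trajectory $O_t$ to jump across the forbidden interval $(\nu_{1,p},\nu_{2,p})$ at some single step $t^*$) matches the paper. The gap is in the final step, and it is not merely technical: the statement you are trying to prove there --- that with probability $1-o(1)$ \emph{no} step has a Hamming jump of order $n$ --- is false for circuits in this class, so no amount of hypercontractivity will deliver it. The total-influence bound $\sum_{i}\sum_j \mathrm{Inf}_i(C_j)\le n\,D_n$ permits up to $D_n$ input coordinates $i$ whose ``vector influence'' $\sum_j\mathrm{Inf}_i(C_j)$ is as large as $n$ (e.g.\ a circuit that reads a single pivot coordinate and flips all $n$ outputs accordingly). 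Since every coordinate of $[n^p]$ is resampled exactly once along the interpolation, such a coordinate produces a bad step with probability bounded away from $0$, regardless of its tail behavior around its (already large) conditional mean. Separately, even where concentration is the right tool, hypercontractive tail bounds at degree $D\le n^{1/2+o(1)}$ are of the form $\exp(-c\lambda^{2/D})$ and are vacuous at that degree; they only become useful for degree $O(\log n)$, i.e.\ constant-depth circuits, which would not reach the claimed $\log n/(2\log\log n)$ bound.

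The paper's resolution is to prove a \emph{lower} bound on the probability of the good event rather than showing it holds with high probability: Theorem~\ref{theorem:no-bad} shows $\pr(\text{no }\kappa\text{-bad pair})\ge \exp\bigl(-(\log n)^{1.2 d(n)}\bigr)$. This is done by combining your influence computation (which is essentially Lemma~\ref{lemma:bound-kappa-bad}, giving $\sum_t\pr(\mathcal{E}_t)\lesssim D_n$) with a multiplicative entropy argument along the resampling path (Lemma~\ref{lemma:qx-bound}): one shows $-\E{\log q(J)}\le 2\log 2\sum_t\pr(\mathcal{E}_t)$ for the conditional no-bad-pair probability $q$, and Jensen converts this into the stated lower bound. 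The contradiction is then quantitative rather than qualitative: the e-OGP argument shows the no-bad-pair event is contained in the complement of an event of probability $1-\exp(-\gamma n)$, so $\exp(-(\log n)^{1.2 d(n)})\le\exp(-\gamma n)$, which is exactly what forces $d(n)\ge\log n/(2\log\log n)$. Replacing your last paragraph with this lower-bound-versus-upper-bound comparison is the missing idea.
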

The result above essentially rules out poly-size circuits with the stated bound on depth
that obtain a solution with value at least $\rho$ times the optimum value, where $\rho$ is any constant larger than $\eta_{\rm OGP,p}/\eta^*_p$. The proof is given in Section~\ref{sec:pf-main}. The constant $2$ in the depth lower bound can be slightly improved, although it appears that it has to be larger than
$1$ for our analysis to go through. We note that $\alpha$ does not appear in the claimed lower bound on the circuit depth because
the bound holds even for mildly super-polynomially sized circuits. We will not attempt to find
the optimal growth rate.

\section{Proofs}

\subsection{Proof of Theorem~\ref{theorem:Main-p-spin}}
\label{sec:pf-main}

We fix any $\alpha > 0$ and any circuit $C\in \mathcal{C}_p(n,n^\alpha,d(n),\rho)$. 
Recall that the output of the Boolean circuit $C$ acting on any input $J\in \{\pm 1\}^{n^p}$
is a vector in $\{\pm 1\}^n$ and thus has $\|\cdot\|_2^2$ norm equal $n$. 
Fix a constant $\kappa>0$.  Consider any two 
tensors $J_1,J_2 \in \{\pm 1\}^{n^p}$ which differ in at most one entry. Namely,
there exists $1\le i_1,\ldots,i_p\le n$ such that $J_1$ and $J_2$ are identical 
at all multi-indices $(i_1',\ldots,i_p')$ that are distinct from the $p$-tuple $(i_1,\ldots,i_p)$.
We say 
that the pair $(J_1,J_2)$ is $\kappa$-bad if 
\begin{equation}\label{eq:bad-defn}
\|C(J_1)-C(J_2)\|_2^2\ge \kappa n.
\end{equation}
Fix two independent copies $J$ and $\tilde J$ of a random $p$-tensor  with i.i.d.\ $\pm 1$ entries 
and consider the  discrete interpolation $J=J_0,J_1,J_2,\ldots,J_{n^p}=\tilde J$ associated with  the
statement of Theorem~\ref{theorem:e-OGP}.
The following ``stability'' result is a direct analogue of Theorem~4.2 in~\cite{gamarnik2020lowFOCS} (but for circuits instead of low-degree polynomials).

\begin{theorem}\label{theorem:no-bad}
The probability that no pair $(J_t,J_{t+1}), 0\le t\le n^p-1$ is $\kappa$-bad is at least
\begin{align}\label{eq:rate}
\exp\left(-\left(\log n\right)^{1.2d(n)}\right),
\end{align}
for all large enough $n$.
\end{theorem}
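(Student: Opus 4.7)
The plan is to follow the strategy of Theorem~4.2 of~\cite{gamarnik2020lowFOCS}, which established the analogous stability result for low-degree polynomials, while supplying the missing ingredient for circuits---a uniform bound on the Fourier-analytic total influence of each output---via the Linial--Mansour--Nisan (LMN) theorem. Throughout, fix a circuit $C \in \mathcal{C}_p(n, n^\alpha, d(n), \rho)$, write $N := n^p$, and let $\hat{C}_j(S)$ denote the Fourier coefficients of $C_j : \{\pm 1\}^N \to \{\pm 1\}$.

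First, for each $j \in [n]$ the scalar output $C_j$ is a Boolean function of size at most $n^\alpha$ and depth at most $d(n)$. Applying the LMN tail bound cited in the paper (page~93 of~\cite{o2014analysis}),
$$\sum_{|S| > k} \hat{C}_j(S)^2 \;\le\; 2 n^\alpha \cdot 2^{-k^{1/d(n)}/20},$$
and choosing $k = K := \lceil(20(\alpha+3p)\log_2 n)^{d(n)}\rceil$ makes the right-hand side at most $n^{-\alpha - 4p}$. Splitting the sum in $I[C_j] = \sum_S |S|\hat{C}_j(S)^2$ at level $K$ then yields $I[C_j] \le K + N \cdot n^{-\alpha-4p} \le C_1(\log n)^{d(n)}$, uniformly in $j$, for some constant $C_1 = C_1(\alpha, p)$.

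Next, write $F(J, i) := \#\{j : C_j(J) \ne C_j(J \oplus e_i)\}$ for the number of outputs sensitive to coordinate $i$ at $J$, set $T := \{0 \le t \le N-1 : F(J_t, I_{t+1}) \ge \kappa n /4\}$, and let $B_t$ denote the Bernoulli$(1/2)$ indicator that the step-$(t{+}1)$ resample agrees with the original entry at coordinate $I_{t+1}$. The key observation is that $\mathbf{1}[t \in T]$ is measurable with respect to the natural filtration $\mathcal{G}_t$ up to time $t$, while $B_t$ is independent of $\mathcal{G}_t$, and the pair $(J_t, J_{t+1})$ is $\kappa$-bad precisely when $t \in T$ and $B_t = 0$. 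Exploiting the symmetry of the random-order chain---$J_t$ is marginally Rademacher and $I_{t+1}$ is uniform on $[N]$, independently of one another---together with Markov's inequality applied to $F(J,i)$ yields
$$\mathbb{E}[|T|] \;\le\; \frac{4}{\kappa n}\sum_{j=1}^n I[C_j] \;\le\; \frac{4 C_1}{\kappa}(\log n)^{d(n)}.$$
Now define the random process
$$M_t := \mathbf{1}[\text{no pair }(J_s, J_{s+1}),\ s<t,\ \text{is $\kappa$-bad}] \cdot 2^{|T \cap [0, t)|}, \qquad M_0 = 1.$$
A case analysis of the three possible values of the multiplicative update $M_{t+1}/M_t \in \{0, 1, 2\}$ (occurring, when $t \in T$, with conditional probabilities $1/2$ and $1/2$, and equal to $1$ deterministically when $t \notin T$) verifies that $(M_t)$ is a nonneg martingale with $\mathbb{E}[M_N] = 1$. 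Since $M_N$ is supported on $\{0\} \cup [1, \infty)$, the Paley--Zygmund inequality gives
$$\Pr[\text{no bad pair}] \;=\; \Pr[M_N > 0] \;\ge\; \frac{(\mathbb{E}[M_N])^2}{\mathbb{E}[M_N^2]} \;=\; \frac{1}{\mathbb{E}[M_N^2]}.$$

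The main technical obstacle is therefore the upper bound $\mathbb{E}[M_N^2] \le \exp(O((\log n)^{d(n)}))$. The natural route is to iterate the conditional second-moment identity $\mathbb{E}[M_{t+1}^2 \mid \mathcal{G}_t] = M_t^2 (1 + \mathbf{1}[t \in T])$ and argue that the aggregate multiplicative contribution along the trajectory is, up to absolute constants, controlled by $\mathbb{E}[|T|]$---for instance by decoupling $M_t^2$ from $\mathbf{1}[t \in T]$ via the conditional Bernoulli independence of the $B_t$'s, paralleling the polynomial argument of~\cite{gamarnik2020lowFOCS}. Once that estimate is in place, combining it with the Paley--Zygmund bound yields $\Pr[\text{no bad pair}] \ge \exp(-O((\log n)^{d(n)})) \ge \exp(-(\log n)^{1.2 d(n)})$ for all $n$ sufficiently large, the factor $(\log n)^{0.2 d(n)}$ in the exponent being precisely the slack that absorbs the implicit constants from LMN and from the second-moment estimate.
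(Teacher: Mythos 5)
The first half of your argument---bounding the total influence $I(C_j)$ of each output via the Linial--Mansour--Nisan theorem, and then converting this, via Markov's inequality over a uniform coordinate, into a bound on the expected number of ``sensitive'' steps $\E{|T|}\le O(\kappa^{-1}(\log n)^{d(n)})$---is essentially the paper's Lemma~\ref{lemma:bound-kappa-bad} combined with \eqref{eq:Dn-bound}, and is correct. The problem is the second half. Your Paley--Zygmund route requires $\E{M_N^2}=\E{4^{|T|}\mb{1}(\text{no bad pair})}\le\exp(O((\log n)^{d(n)}))$, i.e.\ an \emph{exponential moment} bound on $|T|$, and you only have a \emph{first moment} bound. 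These are not interchangeable: if, say, $|T|=n^p$ with probability $(\log n)^{d(n)}n^{-p}$ and $|T|=0$ otherwise, then $\E{|T|}$ is as required but $\E{2^{|T|}}$ is doubly exponential. The decoupling you gesture at is not available, because the indicators $\mb{1}[t\in T]$ are adapted to the trajectory: whether $F(J_t,I_{t+1})\ge\kappa n/4$ depends on $J_t$, which is built from the very Bernoulli variables $B_s$, $s<t$, that drive the martingale, so the events $\{t\in T\}$ can in principle be strongly positively correlated with each other and with $M_t^2$. Iterating $\E{M_{t+1}^2\mid\mathcal{G}_t}=M_t^2(1+\mb{1}[t\in T])$ therefore does not factor, and the step you label ``the main technical obstacle'' is exactly the content of the theorem; it is left unproved.

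The paper circumvents this by working with the logarithm rather than the second moment. Writing $q(x)$ for the conditional probability of no bad pair given $J_0=x$ (and the order), Lemma~\ref{lemma:qx-bound} proves by induction along the interpolation path that $-\E{\log q(J_0)}\le 2\log 2\sum_t\pr(\mathcal{E}_t)$: each step contributes an additive penalty of at most $\log 2$ times the probability that the step is sensitive, which is precisely a first-moment quantity (your $\log 2\cdot\E{|T|}$, up to constants). Jensen's inequality in the direction $-\log\E{q}\le-\E{\log q}$ then gives $\pr(\text{no bad pair})=\E{q(J_0)}\ge\exp(-O(\E{|T|}))$. In effect the paper proves the bound $\pr(M_N>0)\ge 2^{-c\,\E{|T|}}$ directly, without ever needing $\E{2^{|T|}}$. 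To repair your proof you would either need to establish concentration of $|T|$ around its mean (which does not follow from the influence bound alone), or replace Paley--Zygmund with this additive, conditional-entropy-style induction.
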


The constants $\kappa,\alpha,\eta_{\rm OGP,p}$ will be subsumed by $\log n$ as we will see.
Note that this probability converges to zero if $d(n)$ is bounded away from zero. For our purposes, we will need the rate of convergence in this bound to be no faster than exponential. (In particular, this requirement will control our depth bound.)

The proof of Theorem~\ref{theorem:no-bad} (which we include for completeness) is similar to Theorem 4.2 of~\cite{gamarnik2020lowFOCS} but with two minor changes: first we observe that \cite{gamarnik2020lowFOCS} applies not just to functions of bounded degree but more generally to functions of bounded \emph{total influence}; we then use the Linial-Mansour-Nisan Theorem to bound the total influence of any low-depth circuit.

Let's first see how Theorem~\ref{theorem:no-bad} implies the result.

\begin{proof}[Proof of Theorem~\ref{theorem:Main-p-spin}]
We fix $\epsilon>0$ with value specified later.
By Theorem~\ref{theorem:scaling-limit}, with probability at least $1-(n^p+1)\exp(-\gamma n)$, the objective values associated with  instances $J_t$ are all at least
$(1-\eps)\eta_p^*n^{p+1\over 2}$.
By Theorem~\ref{theorem:e-OGP},
with probability at least $1-\exp(-\gamma n)$ the e-OGP holds with parameters $\eta_{\rm OGP,p},~ \nu_{1,p}<\nu_{2,p}$.
Call the intersection of these two events $\mathcal{A}$ and assume the same constant $\gamma$ for convenience,
so that $\pr(\mathcal{A})\ge 1-\exp(-\gamma n)$ and the multiplier $n^p$ in the union bound is absorbed
by making $\gamma$ smaller.
Fix any $\kappa \le (\nu_{2,p}-\nu_{1,p})^2$.
Denote the event described in Theorem~\ref{theorem:no-bad}---namely the event that no pair $(J_t,J_{t+1}), 0\le t\le n^p-1$
is $\kappa$-bad---by $\mathcal{B}$.
We claim that $\mathcal{B}\subset \mathcal{A}^c$. Assuming this claim, we obtain
\begin{align*}
\exp\left(-\left(\log n\right)^{1.2d(n)}\right) \le \pr(\mathcal{B}) \le \pr(\mathcal{A}^c) \le \exp(-\gamma n),
\end{align*}
and the desired lower bound on depth $d(n)$ is obtained by changing the constant from $1.2$ to $2$ in order to subsume the term depending on $\gamma$.

It remains to prove the claim $\mathcal{B}\subset \mathcal{A}^c$. By way of contradiction, suppose that $\mathcal{B}\cap \mathcal{A}$ is non-empty.
(Note that the event $\mathcal{A}$ implies in particular that the value of the optimization problem (\ref{eq:ground-state-value}) is non-negative for every $J_t$.)
By assumption, the circuit $C$ must produce
solutions $C(J_t), 1\le t\le n^p$ with value at least a factor of $\rho$ from optimality. 
On the other hand, on the event $\mathcal{A}$, we have that the optimal value is at 
least $(1-\eps)\eta_p^* n^{p+1\over 2}.$ So if we choose $\eps$ small enough 
that $\rho (1-\eps)\eta_p^* >\eta_{{\rm OGP}, p}$, 
we have that the solutions $C(J_t)$ yield an objective value above the onset of OGP, that is, $\langle J_t,C(J_t)\rangle \ge \eta_{\rm OGP,p}n^{p+1\over 2}$.

On $\mathcal{A}$ we have by part (b) of the e-OGP that 
$\langle C(J_0),C(J_{n^p}) \rangle \le \nu_{1,p} n$. Let $1\le t_0\le n^p$ be the smallest index for which
$\langle C(J_0),C(J_{t_0}) \rangle \le \nu_{1,p} n$, and thus $\langle C(J_0),C(J_{t_0-1}) \rangle > \nu_{1,p} n$. 
(Note that this includes the possibility $t_0=1$.)
On $\mathcal{A}$, part (a) of the e-OGP applied
in the case $t_1=0,t_2=t_0-1$ implies that in fact
$\langle C(J_0),C(J_{t_0-1}) \rangle \ge\nu_{2,p} n$. By Cauchy--Schwarz,
\begin{align*}
\|C(J_0)\|_2 \cdot \|C(J_{t_0-1}) - C(J_{t_0})\|_2 &\ge |\langle C(J_0), C(J_{t_0-1}) - C(J_{t_0}) \rangle| \\
&= |\langle C(J_0), C(J_{t_0-1}) \rangle - \langle C(J_0), C(J_{t_0}) \rangle| \\
&\ge \nu_{2,p} n - \nu_{1,p} n
\end{align*}
and so, since $\|C(J_0)\|_2 = \sqrt{n}$,
\[ \|C(J_{t_0-1}) - C(J_{t_0})\|_2^2 \ge (\nu_{2,p} - \nu_{1,p})^2 n \ge \kappa n, \]
and thus $(J_{t_0-1},J_{t_0})$ is a $\kappa$-bad pair, contradicting the definition of the event $\mathcal{B}$. 
\end{proof}

Before turning to the proof of Theorem~\ref{theorem:no-bad}, let us briefly recall the following notions from Boolean analysis/Fourier analysis on $\{\pm 1\}^m$; see e.g.~\cite{o2014analysis} for a reference. (In our setting, we will always work with the specific case $m=n^p$.)
Consider the standard Fourier expansion of functions on $\{\pm 1\}^{m}$ associated with the
uniform measure on $\{\pm 1\}^{m}$. The basis of this expansion are monomials of the form
$x_S\triangleq \prod_{i\in S}x_i,\, S\subset [m]$. For every function $g:\{\pm 1\}^{m}\to \R$, the associated
Fourier coefficients are 
\begin{align*}
\hat g_S=\E{g(x) x_S}, \qquad S\subset [m],
\end{align*}
where the expectation is with respect to the uniform measure on $x=(x_1,\ldots,x_{m})\in \{\pm 1\}^{m}$. 
Then the Fourier expansion of $g$ is 
$g=\sum_{S\subset [m]} \hat g_S x_S$, and the Parseval (or Walsh) identity states that $\sum_S \hat g_S^2=\E{g(x)^2}$. 
For $i \in [m]$, let $L_i$ denote the Laplacian operator:
\begin{align*}
L_i g(x) =\sum_{S\ni i}\hat g_S x_S,
\end{align*}
and let 
$I(g)$ be the total influence
\begin{align}\label{eq:I-in-L}
I(g)= \sum_{i\in  [m]} \E{L_i g(x)^2}  
=\sum_{S \subset [m]} |S|\, \hat g_S^2. 
\end{align}
Also note that
\begin{align}
\E{L_i g(x)^2}={1\over 2}\left(\E{L_i g(x_{-i,-1})^2}+\E{L_i g(x_{-i,1})^2}\right) \label{eq:L-i-conditioned}
\end{align}
where for $\ell=\pm 1$, $x_{-i,\ell}\in \{\pm1\}^{m}$ is obtained from $x$ by fixing
the $i$-th coordinate of $x$ to $\ell$.

\begin{proof}[Proof of Theorem~\ref{theorem:no-bad}]
Fix $\delta_n=n^{-\beta}$ with some constant $\beta>p$. Let $\hat C_{S,j}, S\subset [n^p]$ be the Fourier coefficients of the function $C_j$, which 
we recall is the $j$-component of $C$. Fix a constant $c>0$ and let
\begin{align}
D_n &=\log(1/\delta_n)\left(c\log {s(n)\over \delta_n}\right)^{d(n)-1} \notag\\
&= \beta\log n\left(c(\alpha+\beta)\log n\right)^{d(n)-1}  \notag\\
&\le \left(\log n\right)^{1.1d(n)}
\label{eq:bound-Dn}
\end{align}
for all sufficiently large $n$. We use the Linial-Mansour-Nisan Theorem~\cite{LMN} (the version we take here is the version from page 93 in~\cite{o2014analysis})
which states for some universal constant $c$ the spectrum of a depth-$d(n)$ circuit with size $s(n)$
is $\delta_n$-concentrated on degree $\le D_n$. More precisely, in our context, it states that for each $j=1,2,\ldots,n$,
\begin{align*}
\sum_{S\subset [n^p] \,:\, |S|> D_n}\left(\hat C_{S,j}\right)^2\le \delta_n.
\end{align*} 
Thus for the influence function we have for each $j$
\begin{align}
I(C_j)&=\sum_{S\subset [n^p]}|S|\left(\hat C_{S,j}\right)^2 \notag \\
&\le \sum_{S \,:\, |S|\le D_n}|S|\left(\hat C_{S,j}\right)^2+n^p\delta_n \notag \\
&\le D_n\sum_{S \,:\, |S|\le D_n}\left(\hat C_{S,j}\right)^2+n^p\delta_n \notag \\
&\le D_n\sum_{S}\left(\hat C_{S,j}\right)^2+n^p\delta_n \notag \\
&= D_n+n^p\delta_n, \label{eq:Dn-bound}
\end{align}
where in the last equation we use the fact $C_j=\pm 1$ and thus $1=\E{C_j^2}=\sum_S \hat C_{S,j}^2$. 

Let $\mathcal{E}_t$ be event that the pair $(J_t,J_{t+1})$ is $\kappa$-bad (as defined in~\eqref{eq:bad-defn}). The following is similar to Lemma 4.3 in~\cite{gamarnik2020lowFOCS}.
\begin{lemma}\label{lemma:bound-kappa-bad}
The following holds for any fixed order $\le_R$ on the $n^p$ entries of the tensor:
\begin{align}\label{eq:sum-bad}
{\kappa\over 4}\sum_{0\le t\le n^p-1}\pr(\mathcal{E}_t) 
\le D_n+n^p\delta_n.
\end{align}
\end{lemma}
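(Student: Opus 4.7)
The plan is to bound each $\pr(\mathcal{E}_t)$ by a Markov-style second-moment estimate, identify $\E{\|C(J_t)-C(J_{t+1})\|_2^2}$ with a sum of squared Laplacian norms, and then collapse the sum over $t$ into the total influences $I(C_j)$, to which the bound \eqref{eq:Dn-bound} already applies.

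First, since each $C_j$ is $\pm 1$-valued,
\[
\|C(J_t)-C(J_{t+1})\|_2^2 = 4\sum_{j=1}^n \mathbf{1}\{C_j(J_t)\neq C_j(J_{t+1})\}.
\]
Let $i_t\in[n^p]$ be the single coordinate resampled between $J_t$ and $J_{t+1}$; this index is deterministic once $\leq_R$ is fixed, and the map $t\mapsto i_t$ is a bijection from $\{0,\ldots,n^p-1\}$ to $[n^p]$. Conditional on $J_t$, the entry $J_{t+1}[i_t]$ is an independent Rademacher, so with conditional probability $1/2$ we have $J_{t+1}=J_t$ and no $C_j$ changes, while with conditional probability $1/2$ the $i_t$-th bit flips and $C_j$ changes exactly when $C_j$ is sensitive to coordinate $i_t$ at $J_t$. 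Since $(L_i C_j(x))^2$ is precisely the indicator that coordinate $i$ is influential for $C_j$ at input $x$---this is essentially the content of \eqref{eq:L-i-conditioned}---and since $J_t$ is marginally uniform on $\{\pm 1\}^{n^p}$, one arrives at
\[
\E{\|C(J_t)-C(J_{t+1})\|_2^2} \;=\; 2\sum_{j=1}^n \E{(L_{i_t}C_j)^2}.
\]

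Next, Markov's inequality against the threshold $\kappa n$ gives
\[
\pr(\mathcal{E}_t) \;\leq\; \frac{2}{\kappa n}\sum_{j=1}^n \E{(L_{i_t}C_j)^2}.
\]
Summing over $t$ and using the bijection $t\mapsto i_t$ together with the Laplacian definition of total influence in \eqref{eq:I-in-L},
\[
\sum_{t=0}^{n^p-1}\pr(\mathcal{E}_t) \;\leq\; \frac{2}{\kappa n}\sum_{j=1}^n \sum_{i\in[n^p]}\E{(L_iC_j)^2} \;=\; \frac{2}{\kappa n}\sum_{j=1}^n I(C_j).
\]
The per-output bound $I(C_j)\leq D_n + n^p\delta_n$ from \eqref{eq:Dn-bound}, summed over the $n$ outputs, cancels the $n$ in the denominator and yields $\sum_t\pr(\mathcal{E}_t)\leq 2(D_n+n^p\delta_n)/\kappa$. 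This gives \eqref{eq:sum-bad} with a factor of $\kappa/2$ on the left, which is strictly stronger than the claimed $\kappa/4$.

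The only nonroutine step is the identity $\E{\|C(J_t)-C(J_{t+1})\|_2^2} = 2\sum_j \E{(L_{i_t}C_j)^2}$, where careful bookkeeping of the two factors of $2$ (one from the $1/2$-probability that the resample actually flips the bit, one from the fact that a flipped Boolean contributes $4$ to the squared $\ell_2$ norm while the squared Laplacian carries indicator weight $1$) is needed. Everything else is Markov plus the Parseval identity already recorded before the lemma, and the fact that $\leq_R$ is fixed (not random) makes the bijection argument $t\mapsto i_t$ entirely deterministic, so no additional conditioning is required.
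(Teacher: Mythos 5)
Your proof is correct and is essentially the paper's argument in a different order: both reduce the left-hand side to the per-coordinate quantities $\E{L_{i}C_j(x)^2}$ (the paper via the identity $L_ig(x_{-i,-1})-L_ig(x_{-i,1})=g(x_{-i,-1})-g(x_{-i,1})$ and $a^2+b^2\ge\frac12(a-b)^2$, you via the equivalent fact that $(L_iC_j)^2$ is the sensitivity indicator for Boolean $C_j$ — note this follows from $L_ig(x)=x_i\cdot\frac{g(x_{-i,1})-g(x_{-i,-1})}{2}$ rather than from \eqref{eq:L-i-conditioned} itself), then sum over the bijection $t\mapsto i_t$ and invoke \eqref{eq:Dn-bound}. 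Your bookkeeping of the $1/2$ probability that the resample actually flips the bit yields the slightly better constant $\kappa/2$ in place of $\kappa/4$, which is immaterial for the application.
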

As $\beta>p$ we have $n^p\delta_n=o(1)$ and
we obtain that for all large enough $n$,
\begin{align}\label{eq:sum-bad-2}
{\kappa\over 4}\sum_{0\le t\le n^p-1}\pr(\mathcal{E}_t) \le 2D_n.
\end{align}

For any fixed $x\in \{\pm 1\}^{n^p}$ let $q(x)$ be the probability of the event $\cap_t \mathcal{E}_t^c$ when we condition on $J_0=x$. Namely $q(x)$ is the probability of there being no $\kappa$-bad pairs when the initial tensor of the interpolation sequence is $x$. The following is a special case of Lemma 4.4 in~\cite{gamarnik2020lowFOCS}; we will include the proof for completeness.

\begin{lemma}\label{lemma:qx-bound}
The following bound holds for any fixed order  $\leq_R$ on the $n^p$ entries of the tensor:
\begin{align*}
-\E{\log q(J)}\le 2\log 2\sum_{0\le t\le n^p-1}\pr(\mathcal{E}_t),
\end{align*}
where $J\in \{\pm 1\}^{n^p}$ is i.i.d. 
\end{lemma}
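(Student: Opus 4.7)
The plan is to decompose $q(x)$ via the chain rule of conditional probability. Setting $q_t(x) := \pr(\cap_{s<t}\mathcal{E}_s^c \mid J_0 = x)$ and $r_t(x) := \pr(\mathcal{E}_t \mid \cap_{s<t}\mathcal{E}_s^c,\ J_0 = x)$, one has $q(x) = \prod_t (1-r_t(x))$ and therefore $-\log q(x) = \sum_t [-\log(1-r_t(x))]$, reducing the task to bounding a sum of elementary log-terms.

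The key structural observation I will rely on is that $r_t(x) \le 1/2$ uniformly in $x$ and $t$. The event $\mathcal{E}_t$ requires $\|C(J_t)-C(J_{t+1})\|_2^2 \ge \kappa n > 0$ and hence $J_t \ne J_{t+1}$; but the Rademacher resampling at coordinate $i_{t+1}$ leaves that entry unchanged with probability exactly $1/2$, so $\pr(\mathcal{E}_t \mid J_t) \le 1/2$ pointwise in $J_t$. Since $\mathcal{E}_t$ is conditionally independent of the past given $J_t$, the tower property preserves this bound under the additional conditioning on $\cap_{s<t}\mathcal{E}_s^c$. With $r_t(x) \le 1/2$ in hand, applying the elementary inequality $-\log(1-r) \le 2\log 2 \cdot r$, valid on $[0,1/2]$, term by term yields the pointwise bound $-\log q(x) \le 2\log 2 \sum_t r_t(x)$.

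The hard part is then converting this pointwise bound into the lemma's conclusion, i.e., controlling $\E{\sum_t r_t(J)}$ by $\sum_t \pr(\mathcal{E}_t)$. The union bound only yields the weighted identity $\sum_t \E{r_t(J)\,q_t(J)} = \pr(\cup_t \mathcal{E}_t) \le \sum_t \pr(\mathcal{E}_t)$, since conditioning on the event $\cap_{s<t}\mathcal{E}_s^c$ can bias the conditional law of $J_t$ and inflate $r_t(J)$ beyond its marginal counterpart $\pr(\mathcal{E}_t \mid J_0 = J)$. I expect the resolution to exploit the Markov and symmetry structure of the interpolation---in particular, that each $J_t$ is marginally uniform on $\{\pm 1\}^{n^p}$---in order to replace the weight $q_t(J)$ by $1$ at the cost only of an absolute constant, already absorbed into the coefficient $2\log 2$, following the argument given in Lemma~4.4 of~\cite{gamarnik2020lowFOCS}.
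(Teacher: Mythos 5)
Your opening steps are sound: the chain-rule factorization $q(x)=\prod_t(1-r_t(x))$, the pointwise bound $r_t(x)\le 1/2$ (indeed $\pr(\mathcal{E}_t\mid J_t)\in\{0,1/2\}$ because the resampling at step $t+1$ is independent of the past and leaves the entry unchanged with probability $1/2$), and the inequality $-\log(1-r)\le 2\log 2\cdot r$ on $[0,1/2]$ together give the pointwise estimate $-\log q(x)\le 2\log 2\sum_t r_t(x)$. But the step you defer is not a loose end---it is the entire content of the lemma, and your route does not close as stated. The quantity $\E{r_t(J)}$ averages probabilities of $\mathcal{E}_t$ conditioned on the past path being good, and conditioning on $\cap_{s<t}\mathcal{E}_s^c$ reweights the law of $J_t$ so that it is no longer uniform; badness at step $t$ can be positively correlated with goodness of the past, and there is no absolute-constant comparison between $\E{r_t(J)}$ and $\pr(\mathcal{E}_t)$ that the coefficient $2\log 2$ could absorb. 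Marginal uniformity of each $J_t$ does not rescue this, precisely because the relevant law is the conditional one.

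The paper's proof never conditions on the past. It runs an induction from the front of the path on the truncated quantities $q_m$, and at each step averages over the two possible values of coordinate $1$ of $J_0$ \emph{before} taking logarithms. The identity $q_m(J_{0,+})=(1/2)q_{1,m}(J_{1,+})+(1/2)q_{1,m}(J_{1,-})\mb{1}(\mathcal{F}^c)$, where $\mathcal{F}$ (badness of flipping coordinate $1$) is measurable with respect to $J_0$ alone, allows concavity of $\log$ to be applied on $\mathcal{F}^c$ and costs only an additive $\log 2$ on $\mathcal{F}$; in expectation the price is $(\log 2)\pr(\mathcal{F})=2\log 2\,\pr(\mathcal{E}_0)$, an \emph{unconditional} probability, and what remains is $-\E{\log q_{1,m}(J_1)}$ with $J_1$ again exactly i.i.d.\ uniform, so the inductive hypothesis applies to the shorter path. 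This averaging-plus-concavity device is the idea your argument is missing; without it (or an equivalent decoupling of each step from the conditioning on earlier success), the final inequality does not follow.
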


We now combine these two results to complete the proof of Theorem~\ref{theorem:no-bad}. 
Note that the probability of no $\kappa$-bad pairs is 
\begin{align*}
\pr\left(\cap_t \mathcal{E}^c_t\right)=\E{q(J)},
\end{align*}
where the expectation is with respect to the starting sample $J=J_0$ and $\leq_R$. By Lemma~\ref{lemma:qx-bound} we have
\begin{align*}
-\log \E{q(J)}\le -\E{\log q(J)}\le 2\log 2\sum_{0\le t\le n^p-1}\pr(\mathcal{E}_t)
\end{align*}
Applying (\ref{eq:sum-bad-2}) this is at most $16(\log 2) D_n/\kappa$. Exponentiating, we obtain
\begin{align*}
\pr\left(\cap_t \mathcal{E}^c_t\right)\ge \exp\left(-16(\log 2) D_n/\kappa\right).
\end{align*}
Recalling bound (\ref{eq:bound-Dn}) and since $\kappa$ is a constant, we obtain the claim by increasing
the exponent $1.1d(n)$ in (\ref{eq:bound-Dn}) to $1.2d(n)$.
\end{proof}

\begin{proof}[Proof of Lemma~\ref{lemma:bound-kappa-bad}]
Combining (\ref{eq:Dn-bound}),(\ref{eq:I-in-L})  and (\ref{eq:L-i-conditioned}) we obtain for each $j$,

\begin{align*}
D_n+n^p\delta_n\ge 
\sum_{i\in [n^p]}{1\over 2}\left(\E{L_i C_j(x_{-i,-1})^2+L_i C_j(x_{-i,1})^2}\right).
\end{align*}
Using the inequality $a^2+b^2\ge \frac{1}{2}(a-b)^2$, the right-hand side is at least
\begin{align*}
{1\over 4}\sum_{i\in [n^p]}\E{\left(L_i C_j(x_{-i,-1})-L_i C_j(x_{-i,1})\right)^2}.
\end{align*}
Note that for any $g$,
\begin{align*}
L_i g(x_{-i,-1})-L_i g(x_{-i,1})=g(x_{-i,-1})-g(x_{-i,1}).
\end{align*}
Summing over $j$ we obtain
\begin{align*}
(D_n+n^p\delta_n)n
&\ge 
{1\over 4}\sum_{i\in [n^p]}\sum_{1\le j\le n}
\E{\left(C_j(x_{-i,-1})-C_j(x_{-i,1})\right)^2} \\
&= {1\over 4}\sum_{i\in [n^p]}
\E{ \|C(x_{-i,-1})-C(x_{-i,1})\|_2^2} \\
&\ge
{1\over 4}\sum_{t\in [n^p]}\kappa n\,\pr(\mathcal{E}_t),
\end{align*}
using~\eqref{eq:bad-defn}.
\end{proof}

\begin{proof}[Proof of Lemma~\ref{lemma:qx-bound}]
Fix any order $\le_R$ with respect to which the sequence $J_0,J_1,\ldots,J_{n^p}$ is generated.
We refer to coordinates $1,2,\ldots,n^p$ as coordinates associated with this order. In particular, 
coordinate $1$ is resampled when moving from $J_0$ to $J_1$. Coordinate $2$ is resampled when
moving from $J_1$ to $J_2$, etc.

For every $0\le m\le n^p$  and tensor $x\in  \{\pm \}^{n^p}$ let $q_m(x)$ be the probability that the truncated
interpolation path $J_0,J_1,\ldots,J_m$ does not contain any $\kappa$-bad pairs $(J_t,J_{t+1}), t\le m-1$, 
when  conditioned on $J_0=x$. 
We claim the following stronger bound, from which the required claim follows as a special case $m=n^p$:
for every $m=0,1,\ldots,n^p$,
\begin{align*}
-\E{\log q_m(J_0)}\le 2\log 2\sum_{0\le t\le m-1}\pr(\mathcal{E}_t).
\end{align*}
The proof is by induction on $m$. In the base case $m=0$, the inequality above is in fact an equality with both sides equal to 0.

Assume the assertion holds for $m'\le m-1$. 
We now establish it for $m$. For any tensor $x\in \{\pm\}^{n^p}$ let $q_{1,m}(x)$ 
be the probability of the event that the 
interpolation path $J_1,J_2,\ldots,J_m$ does not contain any $\kappa$-bad pairs $(J_t,J_{t+1}), 1\le t\le m-1$, 
when  conditioned on $J_1=x$. Observe that by our inductive assumption we have
\begin{align}\label{eq:1-to-m}
-\E{\log q_{1,m}(J_1)}\le 2\log 2\sum_{1\le t\le m-1}\pr(\mathcal{E}_t).
\end{align}
For any tensor $x\in \{\pm\}^{n^p}$ let $x_{\pm}$ be the tensor obtained from $x$ by forcing coordinate
$1$ to be $\pm$. We have
\begin{align}\label{eq:qm-expanded}
-\E{\log q_m(J_0)}=-\E{(1/2)\log q_m(J_{0,+})+(1/2)\log q_m(J_{0,-})}.
\end{align}
Let $\mathcal{F}$  be the event that changing the value of the coordinate $1$ is $\kappa$-bad. Note that this
event is measurable (determined) by the realization of $J_0$, and furthermore 
$\mathcal{E}_0=\mathcal{F}$ on the event that  coordinate $1$ of $J_1$ (and therefore $J_2,\ldots,J_m$)
is different from one of $J_0$, and $\mathcal{E}_0=\emptyset$ otherwise. In particular,
$\pr(\mathcal{E}_0)=(1/2)\pr(\mathcal{F})$.

We claim
\begin{align}\label{eq:J0+}
q_m(J_{0,+})=(1/2)q_{1,m}(J_{1,+})+(1/2)q_{1,m}(J_{1,-})\mb{1}(\mathcal{F}^c).
\end{align}
We justify this identity as follows. With probability $1/2$ the  coordinate $1$ of $J_1$ is $+1$, namely
$J_1=J_{1,+}$. Conditioned on  this event
we have 
$q_m(J_{0,+})=q_{1,m}(J_{1,+})$. On the other hand, with probability $1/2$  coordinate $1$ of $J_1$ is $-1$.
In this case no bad pair occurs if  $\mathcal{F}^c$ takes place and furthermore no bad pairs occur
during the remaining $m-1$ resamplings, the probability of which is $q_{1,m}(J_{1,-})$. 

Similarly, 
\begin{align}\label{eq:J0-}
q_m(J_{0,-})=(1/2)q_{1,m}(J_{1,-})+(1/2)q_{1,m}(J_{1,+})\mb{1}(\mathcal{F}^c).
\end{align}
If the event $\mathcal{F}^c$ occurs, the right-hand sides of the
expressions (\ref{eq:J0+}) and (\ref{eq:J0-}) are identical, so 
on this event we obtain by the concavity of $\log$
\begin{align*}
(1/2)\log q_m(J_{0,+})+(1/2)\log q_m(J_{0,-})&=\log\left((1/2)q_{1,m}(J_{1,+})+(1/2)q_{1,m}(J_{1,-})\right) \\
&\ge (1/2)\log\left(q_{1,m}(J_{1,+})\right)+(1/2)\log\left(q_{1,m}(J_{1,-})\right).
\end{align*}
On the other hand, if the event $\mathcal{F}$ occurs then
\begin{align*}
\log q_m(J_{0,+})=\log\left((1/2)q_{1,m}(J_{1,+})\right),
\end{align*}
and
\begin{align*}
\log q_m(J_{0,-})=\log\left((1/2)q_{1,m}(J_{0,-})\right).
\end{align*}
In this case 
\begin{align*}
(1/2)\log q_m(J_{0,+})+(1/2)\log q_m(J_{0,-})&=
(1/2)\log\left((1/2)q_{1,m}(J_{1,+})\right)+(1/2)\log\left((1/2)q_{1,m}(J_{1,-})\right) \\
&=-\log 2+(1/2)\log\left(q_{1,m}(J_{1,+})\right)+(1/2)\log\left(q_{1,m}(J_{1,-})\right).
\end{align*}
Combining, we obtain
\begin{align*}
&(1/2)\log q_m(J_{0,+})+(1/2)\log q_m(J_{0,-}) \\
&\ge -(\log 2)\mb{1}(\mathcal{F})+(1/2)\log\left(q_{1,m}(J_{1,+})\right)+(1/2)\log\left(q_{1,m}(J_{1,-})\right).
\end{align*}

\noindent Applying (\ref{eq:qm-expanded}) we obtain
\begin{align*}
-\E{\log q_m(J_0)} &\le (\log 2)\pr(\mathcal{F})-(1/2)\E{\log q_{1,m}(J_{1,+})}-(1/2)\E{\log q_{1,m}(J_{1,-})} \\
&=(\log 2)\pr(\mathcal{F})-\E{\log q_{1,m}(J_1)}.
\end{align*}
Recalling $\pr(\mathcal{F})=2\pr(\mathcal{E}_0)$ and applying (\ref{eq:1-to-m})
we obtain the claim.
\end{proof}

\subsection{Proof of Theorem~\ref{theorem:e-OGP}}
\label{sec:pf-ogp}

The proof of Theorem~\ref{theorem:e-OGP} is similar to that in the Gaussian
case after a coupling argument where we couple the path $J_t$ above to a discretized $\hat{J}_t$ induced by 
correlated Rademacher random vectors. We explain here the key steps that are different. In the following for two mean zero random vectors we say that $X\sim_\rho Y$, if $(X_i,Y_i)$ and $(X_j,Y_j)$ are independent for $i\neq j$  and $\E{X_i Y_i}=\rho$ for each $i$. 

Let us begin with the following construction of $\leq_R$ which corresponds to said coupling. Fix $\delta>0$
so that $1/\delta$ is an integer, and let $0=\rho_0<\rho_1<\ldots<\rho_{\frac{1}{\delta}}=1$ 
be $\rho_k=k\delta$. We now construct a sequence of  
correlated instances of Rademacher random vectors as follows. Let $\hJ_0$ be an i.i.d.\ Ber(1/2) random vector.
Now for each entry of $\hJ_0$, we resample it independently with probability $\rho_1$ and leave it as it was with probability $1-\rho_1$, to obtain $\hJ_1$. Notice that $\hJ_1\sim_{1-\rho_1}\hJ_0$.
Now for those entries of $\hJ_1$ that have not been resampled, resample them with probability $(\rho_2-\rho_1)/(1-\rho_1)$. Repeat this process $1/\delta$ times to obtain a sequence $\hJ_k$ of vectors each of which has i.i.d.\ Ber(1/2) entries and are such that if $\rho_k>\rho_\ell$ then $\hJ_k \sim_{1-(\rho_k-\rho_\ell)}\hJ_\ell$. 
Furthermore, notice that as $\rho_{1/\delta}=1$, in the last step of this sequence, all of the remaining entries are resampled (independently) with probability 1, so that $\hJ_{1/\delta}$ is an independent copy of $\hJ_0$.
Now for each $k=1,\ldots,1/\delta$, 
let ${\cal I}_k$ be indices
that have been resampled when going from $\hJ_{k-1}$ to $\hJ_{k}$. We now define $\leq_R$ in the obvious way: if $I\in {\cal I}_k,I'\in{\cal I}_\ell$, then $I\leq_RI' $ if $k\leq \ell$. Within ${\cal I}_k$ we let $\leq_R$ be the uniform at random order for ${\cal I}_k$. Note that $\{{\cal I}_k\}$ is a partition of $[n^p]$. Evidently the law of $\leq_R$ is that of the uniform at random order. Notice that if we construct $(J_k)$ in the corresponding way and let $\tau_k$ be such that $\tau_0=0$ and $\tau_k-\tau_{k-1}=\abs{\mathcal{I}_k}$, then $J_{\tau_k}= \hJ_k$.

Now let us note the following continuity argument that allows us to prove the OGP statement for $(J_{\tau_k})$. 
To this end, let us note the following lemma whose proof we defer momentarily:
\begin{lemma}\label{lem:free-energy-bound}
We have that
\[
H_{t,s}(\sigma)= \frac{1}{n^{\frac{p-1}{2}}}(\langle{J_t,\sigma}\rangle-\langle{J_s,\sigma}\rangle)
\]
has 
\[
P(\max_{\sigma}\abs{H_{t,s}(\sigma)}\geq K\cdot\sqrt{t-s} n)\leq Ce^{-cn},
\]
for some $K,C,c>0$ depending only on $p$.
\end{lemma}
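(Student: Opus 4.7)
The plan is to reduce to a Hoeffding-type concentration bound for each fixed $\sigma$, and then union-bound over the $2^n$ spin configurations. The key observation is that, conditional on the random order $\leq_R$, the difference $\langle J_t,\sigma\rangle - \langle J_s,\sigma\rangle$ is a sum of $t-s$ independent, mean-zero random variables each bounded in absolute value by $2$, so no subtle probabilistic input is required beyond classical concentration.

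Concretely, for $s<\tau\le t$ let $I_\tau\in[n^p]$ denote the index resampled at step $\tau$, so that $J_t$ and $J_s$ agree outside $\{I_{s+1},\ldots,I_t\}$. Writing $\sigma_I=\sigma_{i_1}\cdots\sigma_{i_p}$ for the degree-$p$ monomial indexed by $I=(i_1,\ldots,i_p)$, I would expand
\[
\langle J_t-J_s,\sigma\rangle \;=\; \sum_{\tau=s+1}^{t}\bigl((J_t)_{I_\tau}-(J_s)_{I_\tau}\bigr)\sigma_{I_\tau},
\]
note that the summands are conditionally independent, mean zero, and bounded by $2$, and apply Hoeffding to obtain, for each fixed $\sigma$,
\[
\pr\!\bigl(|\langle J_t-J_s,\sigma\rangle|\ge u \,\bigm|\, \leq_R\bigr)\;\le\; 2\exp\!\bigl(-u^2/(8(t-s))\bigr).
\]

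To match the threshold in the statement I would take $u=K\sqrt{t-s}\,n^{(p+1)/2}$, so that $u/n^{(p-1)/2}$ equals $K\sqrt{t-s}\,n$; the exponent then becomes $K^2 n^{p+1}/8$. A union bound over $\sigma\in\{\pm 1\}^n$, followed by averaging out $\leq_R$, yields
\[
\pr\!\Bigl(\max_\sigma |H_{t,s}(\sigma)|\ge K\sqrt{t-s}\,n\Bigr)\;\le\; 2^{n+1}\exp\!\bigl(-K^2 n^{p+1}/8\bigr),
\]
which is of the required form $Ce^{-cn}$ once $K$ is taken large enough, since $p\ge 4$ makes $n^{p+1}$ dwarf the entropy cost $(n+1)\log 2$ of the union bound. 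There is essentially no obstacle here; the threshold $K\sqrt{t-s}\,n$ in the statement is actually much weaker than what the calculation gives, and this slack is precisely what is needed downstream for the continuity argument that transfers the OGP from the discrete skeleton $\hJ_k$ back to arbitrary points of the interpolation path $(J_t)$.
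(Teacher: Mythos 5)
Your proof is correct, and it reaches the conclusion by a genuinely different and more elementary route than the paper. The paper works with the maximum directly: it shows that, conditionally on $\leq_R$, the quantity $M=\frac1n\max_\sigma H_{t,s}(\sigma)$ is a Lipschitz function of the resampled coordinates, applies McDiarmid's inequality to concentrate $M$ around its mean, and then bounds $\E{M}$ by passing to a Gaussian surrogate via Talagrand's comparison inequality and controlling the Gaussian expectation with Slepian's lemma. You instead apply Hoeffding to $\langle J_t-J_s,\sigma\rangle$ for each fixed $\sigma$ --- correctly noting that, given $\leq_R$, it is a sum of $t-s$ independent, mean-zero summands bounded by $2$, since each index in $\{I_{s+1},\dots,I_t\}$ is resampled exactly once --- and then union bound over the $2^n$ spin configurations. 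Both routes control $\max_\sigma|\langle J_t-J_s,\sigma\rangle|$ at the same scale $\sqrt{(t-s)n}$; yours trades the Gaussian comparison machinery for an explicit $2^n$ entropy factor, which is harmless here. One point worth making explicit in your write-up: the threshold $K\sqrt{t-s}\,n$ as literally stated is so generous that the event is deterministically empty for $K\ge 2$ and large $n$, since $|H_{t,s}(\sigma)|\le 2(t-s)n^{-(p-1)/2}\le 2\sqrt{t-s}\cdot n^{p/2}\cdot n^{-(p-1)/2}=2\sqrt{(t-s)n}\le K\sqrt{t-s}\,n$. The bound that the subsequent continuity step ($\max|H_{t,s}|\le K\sqrt{\delta}\,n$ with $t-s\le|\mathcal{I}_k|\approx\delta n^p$) actually requires is the sharper threshold $K\sqrt{t-s}\,n^{1-p/2}$, i.e.\ $u=K\sqrt{(t-s)n}$ in your notation; your Hoeffding computation delivers exactly this, with exponent $K^2n/8$ beating the union bound once $K^2>8\log 2$. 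So your closing observation about slack is right, but it is better viewed as evidence of a normalization slip in the lemma's statement than as spare room: at the threshold the argument genuinely needs, your proof still closes, and that is the version you should record.
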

With this in hand, note that there are at most $O(n^{2p})$ pairs of such $t,s$, so that by a union bound and this lemma, we have that
\[
\max_{k\leq\frac{1}{\delta}}\max_{\tau_{k-1}\leq t\leq s\leq \tau_k} \abs{H_{t,s}(\sigma)}\leq K\sqrt{\delta}n.
\]   

From here, the result follows from the following theorem which is essentially Theorem~\ref{theorem:e-OGP} but for the $\rho$-correlated instances. The proof of this theorem  is found in~\cite{gamarnik2021overlap} for the case of Gaussian distribution of the entries of $J$. Its extension for the case of Rademacher distribution is obtained by standard universality type arguments and is omitted. See for example \cite{carmona2006universality,auffinger2016universality,sen2018optimization,chen2019suboptimality}  for similar arguments.

\begin{theorem}
Let $(X_I)$ be i.i.d.\ bounded random variables with 
\[
\E{X_I}=0\qquad \E{X_I^2} = 1\qquad \|X_I\|_\infty < M
\]
for some $M>0$.
Let $(X^\rho_I)$ be such that $X^\rho \sim_\rho X$. 
Let $H(\sigma)$ denote the Hamiltonian corresponding to $X$ and $H_\rho$ denote 
that corresponding to $X^\rho$.
Then for every $\eps>0$ and $\rho\in(0,1)$, there
exists $C,\tilde{\mu}>0$ such that with probability $1-\exp(-Cn)/C$, 
for every $\sigma_1,\sigma_2 \in \{\pm 1\}^n$ satisfying
$ H(\sigma_1)/n\geq \E{\eta_n} - \tilde{\mu},H^\rho(\sigma_2)/n\geq\E{\eta_n}-\tilde{\mu}$
we have that $\abs{\langle\sigma_1,\sigma_2\rangle}\leq n\eps$.
If $\rho=1$, the same result holds except we have that there are $0\leq a<b\leq 1$
such that $\abs{\langle\sigma_1,\sigma_2\rangle}\leq [0,a]\cup[b,1]$.
\end{theorem}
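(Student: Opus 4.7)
The plan is to reduce to the Gaussian case, which is the content of~\cite{gamarnik2021overlap}, and then transfer to arbitrary bounded i.i.d.\ couplings by a Lindeberg-type universality swap, along the lines of~\cite{carmona2006universality,auffinger2016universality,sen2018optimization,chen2019suboptimality}. By a union bound over a finite cover of the forbidden overlap interval $I$ (which equals $[\eps,1]$ for $\rho\in(0,1)$ and some $[a',b']\subset(0,1)$ for $\rho=1$), it is enough to show that the expected constrained count
\[
\E{Z_I}=\E{\bigl|\{(\sigma_1,\sigma_2)\in\{\pm 1\}^{2n}:|\langle\sigma_1,\sigma_2\rangle|/n\in I,\ H(\sigma_1)/n,\,H^\rho(\sigma_2)/n\ge\E{\eta_n}-\tilde\mu\}\bigr|}
\]
is at most $e^{-c_I n}$ for some $c_I>0$ and all large $n$; Markov's inequality then finishes the argument.

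In the Gaussian case, $(H(\sigma_1),H^\rho(\sigma_2))$ is a centered Gaussian pair whose covariance depends on $(\sigma_1,\sigma_2)$ only through $q=\langle\sigma_1,\sigma_2\rangle/n$ and equals $\rho\, q^p$ in the natural normalization. Pinning $q$ and applying the Parisi/Guerra evaluation of the constrained two-replica free energy shows that at every $q\in I$ the annealed ground-state energy of the constrained pair is strictly below $2\E{\eta_n}$; this is the chaos-in-overlap computation of~\cite{chen2019suboptimality}, adapted to the coupled pair $(H,H^\rho)$ in~\cite{gamarnik2021overlap}. Combining this with Borell--TIS concentration of the Gaussian maximum yields the required exponential bound on $\E{Z_I}$.

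For the transfer to bounded $(X_I)$, I would introduce the smoothed log-partition function
\[
\Phi_n(\beta)=\frac{1}{n}\log\sum_{\sigma_1,\sigma_2}\mathbf{1}\{|\langle\sigma_1,\sigma_2\rangle|/n\in I\}\,\exp\bigl(\beta(H(\sigma_1)+H^\rho(\sigma_2))\bigr)
\]
and compare it under the bounded and Gaussian models by swapping the coupled pairs $(X_I,X_I^\rho)$ one coordinate at a time against bivariate centered Gaussians with matching first two marginal moments and correlation $\rho$. A third-order Taylor expansion in a single coupling, combined with $\|X_I\|_\infty\le M$ and the $n^{-(p-1)/2}$ normalization of $H$, gives a per-swap error of order $n^{-3(p-1)/2}$ times a bounded Gibbs cumulant; summing over the $n^p$ entries and rescaling by $1/n$ yields a total error of $O(n^{-(p-1)/2})=o(1)$ in $\Phi_n(\beta)$. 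Sending $\beta\to\infty$ by a standard saddle-point argument then converts the Gaussian free energy gap into the ground-state overlap gap statement for bounded couplings.

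The main obstacle is uniform control, in $(\sigma_1,\sigma_2)$ and in $\beta$, of the third derivative of $\Phi_n(\beta)$ in a single coupling; this reduces to a standard Gibbs-cumulant estimate exploiting uniform boundedness of $X_I$ and the linearity of $H$ in the couplings, but the precise bookkeeping (and in particular the control needed to push $\beta\to\infty$ while preserving an exponential rate) is the substantive technical step. For the $\rho=1$ case the same argument applies with $H^\rho=H$: the configurations $\sigma_2=\pm\sigma_1$ are then always feasible, so the forbidden overlap region excludes open neighbourhoods of both $0$ and $1$, producing the $[0,a]\cup[b,1]$ dichotomy in place of the small-overlap conclusion.
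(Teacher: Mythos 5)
Your sketch follows essentially the same route the paper itself takes for this theorem: the paper does not supply a proof either, but defers the Gaussian case to \cite{gamarnik2021overlap} (which rests on the constrained two-replica Guerra--Talagrand/chaos computation of \cite{chen2019suboptimality}) and asserts that the extension to bounded couplings follows from standard Lindeberg-type universality arguments as in \cite{carmona2006universality,auffinger2016universality,sen2018optimization,chen2019suboptimality} --- precisely the smoothed-free-energy interpolation swap you outline, with the correct $O(n^{-(p-1)/2})$ error accounting. One caution: the step you frame as bounding the annealed count $\E{Z_I}$ and applying Markov is not in general tight enough at ground-state energies; the cited arguments instead use the quenched Guerra--Talagrand upper bound on the constrained two-replica free energy together with concentration and a union bound over a discretization of the overlap interval, which is also what your own invocation of the ``Parisi/Guerra evaluation'' implicitly requires.
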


\begin{proof}[Proof of Lemma~\ref{lem:free-energy-bound}]
First, by symmetrization, it suffices to prove the same bound for $H_{t,s}(\sigma)$. In this case, 
note that conditionally on  $\leq_R$, the map 
\[
\hJ_t-\hJ_s\mapsto \frac{1}{n} \max_\sigma H_{t,s}(\sigma)
\]
is uniformly $\sqrt{2^p\abs{\tau_k-\tau_s}/n^{p+1}}$-Lipschitz. 
On the other hand, conditionally on $\leq_R$, we have that $Y = J_t-J_s$
is a Rademacher vector of length at most $\mathcal{I}_{t,s}$ where $\mathcal{I}_{t,s}$ are those indices resample between time $t$ and $s$.
Note that $\abs{\mathcal{I}_{t,s}}=\abs{\tau_k-\tau_s}$.
Thus applying McDairmids inequality, we have that, conditionally on $\leq_R$, this maximum, 
\[
M = \frac{1}{n}\max_\sigma H_{t,s}(\sigma)
\]
 has
\[
P(\abs{ M -\E{M}} \geq \sqrt{\abs{\tau_k-\tau_s}} \eta)\leq \exp(-2\eta^2 \abs{\tau_k-\tau_s}).
\]
Now, since $H_{t,s}$ is conditionally sub-gaussian, we may apply Talagrand's comparison inequality \cite[Cor 8.6.2]{vershynin2018high}, 
to obtain
\[
\E{M}\leq C \E{\tilde{M}}
\]
for some universal constant where $\tilde{M}$ is obtained from $M$ by replacing $(X_I,X^\rho_I)$ with gaussian random vectors correlated in the same fashion. By a standard application of Slepian's inequality, we obtain,
\[
\E{\tilde{M}} \leq C(p) \sqrt{\abs{\tau_k-\tau_s}},
\]
where we have used here that the variance of $H_{t,s}$ is at most $2(t-s)$.
Finally note that with probability $1-\exp(-cn^p)$, we have $\abs{\tau_k-\tau_s}\leq 2(t-s)$.
Combining these bounds yields the desired result.
\end{proof}

\section*{Acknowledgements} The authors are very grateful to Benjamin Rossman for educating us
on the state of the art bounds in circuit complexity. The first author gratefully acknowledges the 
support from the NSF grant DMS-2015517. 
The second author gratefully acknowledges support from NSERC; cette recherche a été financée en partie par CRSNG [RGPIN-2020-04597, DGECR-2020-00199]. 
The third author was partially supported by NSF grant DMS-1712730 and by the Simons Collaboration on Algorithms and Geometry.

\bibliographystyle{amsalpha}
\bibliography{main}

\end{document}